\let\csname equation*\endcsname\relax
\let\csname endequation*\endcsname\relax
\newcommand{\eins}{\mathbbm{1}}
\newcommand{\bea}{\begin{eqnarray}}
\newcommand{\eea}{\end{eqnarray}}
\newcommand{\be}{\begin{equation}}
\newcommand{\ee}{\end{equation}}
\newcommand{\BraKet}[2]{\ensuremath{\langle #1|#2\rangle}}
\newcommand{\WW}{{\mathcal{W}}}
\newcommand{\ketbra}[1]{\ensuremath{| #1 \rangle\!\langle #1 |}}
\newtheorem{theorem}{Theorem}
\newtheorem{definition}[theorem]{Definition}
\begin{document}

\title[Characterizing multipartite entanglement 
classes]{Characterizing multipartite entanglement 
classes via higher-dimensional embeddings}

\author{Christina Ritz, Cornelia Spee, and Otfried G\"uhne}

\address{Naturwissenschaftlich-Technische Fakult\"at,
Universit\"at Siegen,\\
Walter-Flex-Stra{\ss}e~3,
57068 Siegen, Germany}


\date{\today}

\begin{abstract} 
Witness operators are a central tool to detect entanglement or to distinguish 
among the different entanglement classes of multiparticle systems, which can 
be defined using stochastic local operations and classical communication (SLOCC). 
We show a one-to-one correspondence between general SLOCC witnesses and a class
of entanglement witnesses in an extended Hilbert space. This relation can be used 
to derive SLOCC witnesses from criteria for full separability of quantum states;
moreover, given SLOCC witnesses can be viewed as entanglement witnesses. As 
applications of this relation we discuss the calculation of overlaps between 
different SLOCC classes and the SLOCC classification in $2\times 3\times 3$-dimensional 
systems.
\end{abstract}

\section{Introduction}

Entanglement is considered to be an important resource for applications 
in quantum information processing, making its characterization 
essential for the field \cite{hororeview, gtreview}. This includes its quantification 
and the development of tools to distinguish between different classes of 
entanglement. In general, entanglement is a resource if the parties 
are spatially separated and therefore the allowed operations are 
restricted to local operations assisted by classical communication 
(LOCC). It can neither be generated nor increased by LOCC transformations. 
Hence, convertibility via LOCC imposes a partial order on the entanglement 
of the states, and this order has been studied in detail \cite{Nielsen, 
kleinmann, ChLe12, MES, connylang, trivialsym}.

For multipartite systems the classification via LOCC is, however, even for pure
states very difficult, so one may consider a coarse grained classification. This 
can be done using the notion of stochastic local operations assisted by classical 
communication (SLOCC).  By definition, an SLOCC class is formed by those pure states 
that can be converted into each other via local operations and classical communication 
with non-zero probability of success \cite{SLOCC222}. SLOCC classes and their transformations
have been characterized for small system sizes and symmetric states \cite{SLOCC222, SLOCC2222, 
SLOCC333, ChMi10, hebenstreit, bastin, liduan, steinhoff, sawicki, sanz}
and it has been shown that for multipartite systems 
there are finitely many SLOCC classes for tripartite systems with local dimensions of 
up to $2\times 3\times m$ and infinitely many otherwise 
\cite{SLOCCfinite}.

Another important problem in entanglement theory is the separability problem, 
i.e., the task to decide whether a given quantum state is entangled or 
separable. Even though several criteria have been found which can decide 
separability in many instances \cite{hororeview, gtreview, peresppt, sep1, 
sep2, sep3, rudolph, doherty}, 
the question whether a general multipartite mixed state is 
entangled or not, remains highly non-trivial. In fact, if the separability 
problem is formulated as a weak membership problem, it has been proven to be 
computationally NP-hard \cite{sepNP, gharibian} in the dimension of the system.

One method to certify entanglement uses entanglement witnesses \cite{EW, EW1, gtreview}. 
An entanglement witness is a hermitian operator which has a positive 
expectation value for all separable states but gives a negative value for at 
least one entangled state. In opposition to other criteria, one main advantage 
of witnesses lies in the fact that no complete knowledge of the state is necessary 
and one just has to measure the witness observable. A special type of witnesses 
are projector-based witnesses of the form $\WW= \lambda  \eins - \ketbra{\psi}$, 
with $\lambda$ being the maximal squared overlap between the entangled state 
$\ket{\psi}$ and the set of all product states. Such projector based witnesses 
can also be used to distinguish between different SLOCC classes 
\cite{3qubitmixed, symmetricmixedSLOCC}. In that case, $\lambda$ is the maximal 
squared overlap between a given state $\ket{\psi}$ in SLOCC class $S_{\ket{\psi}}$
and the set of all states within another SLOCC class $S_{\ket{\varphi}}$.  
If a negative expectation value of $\WW$ is measured, the considered state
$\varrho$ cannot be within the convex hull of $S_{\ket{\varphi}}$ or lower 
entanglement classes. In this context one should note that such statements
require an understanding of the hierarchic structure of SLOCC classes, in the
sense that some classes are contained in others \cite{3qubitmixed, symmetricmixedSLOCC}.

In this paper we establish an one-to-one correspondence between general SLOCC 
witnesses for multipartite systems and a class of entanglement witnesses in a
higher-dimensional system, built by two copies of the original one. This extends 
the results of Ref.~\cite{hulpke} from the bipartite setting to the multipartite 
one and provides at the same time a simpler proof. The equivalence provides not 
only a deeper insight in the structure of SLOCC classes but enables to construct 
whole sets of entanglement witnesses for high-dimensional systems from the SLOCC 
structure of lower dimensions and vice versa. As such, from the solution for one 
problem, the solution to the related one readily follows. 

The paper is organized as follows. In Section II we briefly review the notion of 
SLOCC operations, entanglement witnesses and SLOCC witnesses. Section III states 
the main result of our work, the one-to-one correspondence among certain entanglement- 
and SLOCC witnesses. 
Furthermore, as optimizing the overlap $\lambda$ between SLOCC classes is in 
general a hard problem and as such often not feasible analytically, a possible 
relaxation of the set of separable states to states with positive partial 
transpose is discussed. Section IV focuses on systems consisting of one qubit 
and two qutrits. Using numerical optimization, we find the maximal overlaps 
between all pairs of representative states of one SLOCC class and arbitrary states 
of another SLOCC class. The implications of these results for the hierarchic 
structure of SLOCC classes are then discussed. Section V concludes the paper 
and provides an outlook.


\section{Preliminaries}
In this section the basic notions and definitions are briefly 
reviewed. We start with the notion of SLOCC equivalence of two 
states and then move on to the definition of entanglement witnesses. 
Finally, we will relate both concepts by recapitulating the notion 
of witness operators that are able to separate between different SLOCC 
classes.

\subsection{SLOCC classes}
As mentioned before two pure states are within the same SLOCC class if 
one can convert them into each other via LOCC with a non-zero probability 
of success. It can be shown that this implies the following definition 
\cite{SLOCC222}. 

\begin{definition}
Two $N$-partite pure quantum states $\ket{\psi}$, $\ket{\varphi}$ are called equivalent 
under SLOCC if  there are $N$ matrices $\{A_i |~ \det(A_i)\neq 0\}$ 
such that
 \begin{equation}
  \ket{\varphi} = \bigotimes_{i=1}^N A_i \ket{\psi} \mbox{ and consequently } 
  \ket{\psi} = \bigotimes_{i=1}^N A^{-1}_i \ket{\varphi}.
 \end{equation}
 \end{definition}
That is, an SLOCC class or SLOCC orbit includes all states that are related by 
local, invertible operators. To extend this definition to mixed states one 
defines the class $S_{\ket{\psi}}$ with the representative $\ket{\psi}$ as 
those mixed states that can be built as convex combinations of pure states 
within the SLOCC orbit of $\ket{\psi}$ and of all pure states that can be 
approximated arbitrarily close by states within this orbit 
\cite{3qubitmixed, symmetricmixedSLOCC}.

\subsection{Entanglement witness}
An hermitean operator that can be used to distinguish between different 
classes of entanglement is called a witness operator. Recall that a mixed 
state that can be written as a convex combination of product states of the 
form $\ket{\psi_s} = \ket{A}\ket{B} \cdots \ket{N}$ is called fully separable, 
and states which are not of this form are entangled \cite{hororeview, gtreview}. A witness 
operator that can certify entanglement has to fulfill the following properties 
\cite{EW, EW1}:

\begin{definition} A hermitean operator $\WW$ is an entanglement witness if 
\begin{align}
(i)& ~\tr(\varrho_{s} \WW)\geq 0 \mbox{ for all separable states } \varrho_{s}, \nonumber \\
(ii)& ~\tr(\varrho_{e} \WW) < 0 \mbox{ for at least one state } \varrho_e,
\end{align}
holds.
\end{definition}
Hence, $\WW$ witnesses the non-membership with respect to the convex set of separable 
states. If $\tr(\varrho \WW)<0$ for some state $\varrho$, then $\WW$ is said to detect 
$\varrho$. A special class of witness operators are projector-based witnesses. Their 
construction is based on the maximal value $\lambda$ of the squared overlap between 
a given entangled state $\ket{\psi}$ with the set of all product states 
$\{\ket{\psi_{s}}\}$. More precisely, $\WW= \lambda  \eins - \ketbra{\psi}$  
with $\ket{\psi}$ being some entangled state and 
$\lambda= \sup_{\{\ket{\psi_{s}}\}} |\BraKet{\psi}{\psi_s}|^2$ is a valid entanglement 
witness \cite{gtreview}. We stress, however, that this is not the most general
way to construct witnesses. Other construction methods make use of various 
separability criteria or physical quantities like Hamiltonians in spin models 
\cite{gezaspin} or structure factors \cite{hermann}.

\subsection{SLOCC witness}
The concept of entanglement witnesses can be generalized to SLOCC witnesses. An 
SLOCC witness is an operator from which one can conclude that a state $\varrho$ 
is not in the SLOCC class $S_{\ket{\psi}}$ \cite{3qubitmixed, symmetricmixedSLOCC}. 

\begin{definition}
A hermitean operator $\WW$ is an SLOCC witness for the class
$S_{\ket{\psi}}$ if
\begin{align}
(i)& ~\tr(\ketbra{\eta} \WW) \geq 0 \mbox{ for all pure states $\ket{\eta} \in S_{\ket{\psi}}$},
\nonumber \\
(ii)& ~\tr(\varrho \WW) < 0 \mbox{ for at least one state $\varrho$}, 
\end{align}
holds. 
\end{definition}
Thus $\WW$ detects for $\tr(\varrho \WW) < 0$ states $\varrho$ that are not within 
$S_{\ket{\psi}}$. Note that it suffices to check positivity on all pure states 
$\ket{\eta}$  in the set of mixed states $S_{\ket{\psi}}$, as these form the extreme
points of this set. Also if one considers $\ket{\psi} = \ket{A}\ket{B} \cdots \ket{N},$
then the set of all SLOCC equivalent states are just all product states and the SLOCC
witness is just a usual entanglement witness. 

One can construct an SLOCC witness via 
\begin{equation}
\WW= \lambda \eins - \ketbra{\varphi},
\end{equation}
where $\lambda$ denotes the maximal squared overlap between all pure 
states $\ket{\eta}$ in the SLOCC class $S_{\ket{\psi}}$ and the 
representative state $\ket{\varphi}$ of SLOCC class $S_{\ket{\varphi}}$, 
i.e. $\lambda = \sup_{\{\ket{\eta}\}}{|\braket{\varphi|\eta}|^2}$. Our main
result, however, does not assume this type of witness and is valid for general
SLOCC witnesses.

A special class of SLOCC witnesses are those verifying the Schmidt rank of a 
given bipartite state. Note that the Schmidt rank is the only SLOCC invariant 
for bipartite systems, and a one-to-one correspondence between Schmidt 
number witnesses and entanglement witnesses in an extended Hilbert space has 
been found \cite{hulpke}. 
In the next section we will show that in fact there 
is a one-to-one correspondence between SLOCC- and entanglement witnesses 
for arbitrary multipartite systems.

\section{One-to-one correspondence between SLOCC- and entanglement witnesses}
\label{secEWSLOCCW}
In the following we will show how to establish a one-to-one 
correspondence between SLOCC witnesses and certain entanglement 
witnesses within a higher-dimensional Hilbert space for arbitrary 
multipartite systems. In order to improve readability, 
our method will be presented for the case of tripartite systems, however, the 
generalization to more parties is straightforward. Then, we will discuss one 
possibility to use this correspondence to derive an SLOCC witnesses from separability
criteria.

\subsection{The correspondence between the two witnesses}
Let us start with formulating the problem. Consider the pure state $\ket{\psi}$, 
which is a representative state of the SLOCC class $S_{\ket{\psi}}$. Then all pure states,
$\ket{\eta}$, within the SLOCC orbit of $\ket{\psi}$ can be reached by applying local 
invertible operators $A,B$ and $C$, that is $\ket{\eta}=A \otimes B \otimes C \ket{\psi}.$
Here, one has to take care that $\ket{\eta}$ is normalized; so, if considering general matrices
$A,B,C$, one has to renormalize the state.
The aim will be to maximize the overlap between a given state $\ket{\varphi}$ and a 
pure state $\ket{\eta}$ within $S_{\ket{\psi}}$,
\begin{equation}
 \sup_{\ket{\eta} \in S_{\ket{\psi}}} |\braket{\varphi|\eta}|
 =
 \sup_{A,B,C}{\frac{|\braket{\varphi|A \otimes B\otimes C|\psi}|}{\Vert A\otimes B\otimes C \ket{\psi}\Vert}},
\end{equation}
which is the main step for constructing the projector-based witness. Stated differently, 
the quantity of interest is the minimal value $\lambda > 0$, such that
\begin{equation}
\label{1}
\sup_{A,B,C}{\frac{ |\braket{\varphi|A \otimes B\otimes C|\psi}|}{\Vert A\otimes B\otimes C \ket{\psi}\Vert}} \leq \sqrt{\lambda}.
 \end{equation}
It can easily be seen that this is true if and only if
\begin{align}
\lambda & \braket{\psi|A^{\dagger}A\otimes B^{\dagger}B\otimes C^{\dagger}C|\psi} \nonumber \\
& - \braket{\psi|A^{\dagger}\otimes B^{\dagger}\otimes C^{\dagger} 
\ket{\varphi}\bra{\varphi}A\otimes B\otimes C|\psi} \geq 0
 \label{3}
 \end{align}
 holds.
 One can then define a witness operator $\WW=\lambda  \eins- \ketbra{\varphi}$ which, 
with the definition of $\ket{\eta}$ from before, satisfies:
\begin{equation}
\label{4}
\braket{\eta|\WW|\eta} \geq 0.
\end{equation}
Note that in the formulation of Eqs.~(\ref{3}, \ref{4}) the normalization of 
$\ket{\eta}=A \otimes B \otimes C \ket{\psi}$ is irrelevant, this trick has 
already been used in Ref.~\cite{kampermann}.

The key idea to establish the connection is the following: In order to prove 
that $\WW$ is an SLOCC witness, one has to minimize in Eq.~(\ref{3}) over all 
matrices $A,B,C$, which do not have any constraint anymore. A matrix like $A$
acting on the Hilbert space $\mathcal{H}_A$ can be seen as a vector on the 
two-copy system $\mathcal{H}_{A_1}\otimes \mathcal{H}_{A_2}$. Then, the 
remaining optimization is the same as optimizing over all product states in 
the higher-dimensional system and requesting that the resulting value is always 
positive. Consequently, the SLOCC witness $\WW$ corresponds to a usual witness 
$\tilde{\WW}$ on the higher-dimensional system. More precisely, as stated in 
the following theorem, one can show that if Eq.~(\ref{4}) holds, then the 
operator $\tilde{\WW}=\WW \otimes \ketbra{\psi^*}$ is positive on all separable 
states $\ket{\xi_{\rm sep}}$ and vice versa. Here and in the following $^*$ denotes 
complex conjugation in a product basis.

\begin{theorem}
\label{Th.1}
Consider the operator $\WW$  on the tripartite 
space $\mathcal{H}=\mathcal{H}_{A}\otimes \mathcal{H}_{B} \otimes \mathcal{H}_{C}$ 
and the operator $\tilde{\WW}=\WW\otimes \ketbra{\psi^*}$ on the two-copy space 
$\mathcal{H} \otimes \mathcal{H}.$ Then, $\WW$ is an SLOCC witness
for the class $S_{\ket{\psi}}$, if and only if the operator $\tilde{\WW}$ is an 
entanglement witness with respect 
to the split $(A_1 A_2|B_1 B_2|C_1C_2)$:
\begin{equation}
\label{5}
\braket{\eta|\WW|\eta} \geq 0 \quad \Leftrightarrow 
\quad \braket{\xi_{\rm sep}|\tilde{\WW}|\xi_{\rm sep}} \geq 0, 
\end{equation}
where $\ket{\xi_{\rm sep}}$ are product states within the two-copy system, that is they 
are of the form $\ket{\xi_{\rm sep}}=\ket{\alpha_{A_1 A_2}} \otimes \ket{\beta_{B_1 B_2}} 
\otimes \ket{\gamma_{C_1C_2}}$ and $\ket{\eta}\in S_{\ket{\psi}}$.
\end{theorem}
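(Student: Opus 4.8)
The plan is to prove the equivalence~(\ref{5}) by translating product vectors on the two-copy space into triples of matrices via the usual matrix-to-vector isomorphism, and then to connect the result to the definition of the SLOCC class by a continuity/density argument; the remaining parts of the theorem are then routine. Writing $\ket{\psi}=\sum_{ijk}\psi_{ijk}\ket{i}\ket{j}\ket{k}$, every product vector on the two-copy system has, after reordering the tensor factors into $(A_1 B_1 C_1\,|\,A_2 B_2 C_2)$, the form $\ket{\xi_{\rm sep}}=\ket{\alpha}_{A_1A_2}\otimes\ket{\beta}_{B_1B_2}\otimes\ket{\gamma}_{C_1C_2}$ with $\ket{\alpha}=\sum_{i,i'}\alpha_{ii'}\ket{i}_{A_1}\ket{i'}_{A_2}$ and similarly for $\ket{\beta},\ket{\gamma}$; this is a bijection between product vectors and triples $(\alpha,\beta,\gamma)$ of square matrices on $\mathcal{H}_A,\mathcal{H}_B,\mathcal{H}_C$. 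The key algebraic step is the identity
\begin{equation}
\braket{\xi_{\rm sep}|\tilde{\WW}|\xi_{\rm sep}}=\braket{\chi|\WW|\chi},\qquad
\ket{\chi}:=(\alpha\otimes\beta\otimes\gamma)\ket{\psi},
\end{equation}
which I would obtain by a short ``ricochet'' computation: since $\tilde{\WW}=\WW\otimes\ketbra{\psi^*}$, applying $\eins\otimes\bra{\psi^*}$ (identity on the first copy) to $\ket{\xi_{\rm sep}}$ contracts $\sum_{i'j'k'}\psi_{i'j'k'}\,\alpha_{ii'}\beta_{jj'}\gamma_{kk'}$ into exactly the components of $(\alpha\otimes\beta\otimes\gamma)\ket{\psi}$ -- the complex conjugation in $\ket{\psi^*}$ is precisely what prevents a transpose from appearing -- so that $(\eins\otimes\bra{\psi^*})\ket{\xi_{\rm sep}}=\ket{\chi}$ and the expectation value of $\tilde{\WW}$ collapses to that of $\WW$ in the state $\ket{\chi}$.

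Second, I would bridge this to the SLOCC class. By definition the pure states $\ket{\eta}\in S_{\ket{\psi}}$ are (the normalizations of) the closure of the orbit $\{A\otimes B\otimes C\ket{\psi}:\det A,\det B,\det C\neq0\}$. Since invertible matrices are dense in all matrices and $(\alpha,\beta,\gamma)\mapsto(\alpha\otimes\beta\otimes\gamma)\ket{\psi}$ is continuous, the family $\{(\alpha\otimes\beta\otimes\gamma)\ket{\psi}\}$ over \emph{all} matrix triples has the same closure as the cone over $S_{\ket{\psi}}$. As $\ket{\chi}\mapsto\braket{\chi|\WW|\chi}$ is continuous and homogeneous of degree two (so overall normalization, and the zero vector, are irrelevant), non-negativity of this form on the orbit -- which is Eq.~(\ref{4}) -- is equivalent to its non-negativity on all of $S_{\ket{\psi}}$, and in turn to its non-negativity for every $\ket{\chi}=(\alpha\otimes\beta\otimes\gamma)\ket{\psi}$ with arbitrary $\alpha,\beta,\gamma$. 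Combined with the identity of the previous paragraph and the bijection between product vectors and matrix triples, this is exactly~(\ref{5}): condition (i) of an SLOCC witness for $\WW$ holds if and only if condition (i) of an entanglement witness for $\tilde{\WW}$ across the split $(A_1A_2|B_1B_2|C_1C_2)$ holds.

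It remains to match the Hermiticity and condition (ii). $\tilde{\WW}$ is Hermitian because $\WW$ and $\ketbra{\psi^*}$ are. If $\tr(\varrho\WW)<0$ for some state $\varrho$ on $\mathcal{H}$, then $\tr[(\varrho\otimes\ketbra{\psi^*})\tilde{\WW}]=\tr(\varrho\WW)<0$; conversely, if $\tr(\sigma\tilde{\WW})<0$ for some state $\sigma$ on the two-copy space, then $\tr(\tilde\varrho\,\WW)<0$ for $\tilde\varrho=(\eins\otimes\bra{\psi^*})\,\sigma\,(\eins\otimes\ket{\psi^*})$, a positive operator on $\mathcal{H}$ which is necessarily nonzero and hence can be normalized to a state. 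Thus $\WW$ satisfies the SLOCC-witness requirements precisely when $\tilde{\WW}$ satisfies the entanglement-witness requirements, and the passage to $N$ parties is immediate, replacing $A,B,C$ by $A_1,\dots,A_N$ throughout. I expect the main obstacle to be the topological step of the second paragraph -- carefully controlling the limit points that enter the definition of $S_{\ket{\psi}}$ and the passage through possibly non-invertible matrices -- whereas the ``ricochet'' identity is essentially bookkeeping.
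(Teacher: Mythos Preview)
Your proof is correct and follows essentially the same approach as the paper: both rely on the vector--matrix isomorphism on each party to identify product vectors $\ket{\xi_{\rm sep}}$ with triples $(\alpha,\beta,\gamma)$ of matrices, establish the identity $\braket{\xi_{\rm sep}|\tilde\WW|\xi_{\rm sep}}=\braket{\chi|\WW|\chi}$ with $\ket{\chi}=(\alpha\otimes\beta\otimes\gamma)\ket{\psi}$, and then pass to the full SLOCC class via density of invertible matrices. The paper reaches the same identity through the eigendecomposition $\WW=\sum_n\kappa_n\ketbra{\alpha^{(n)}}$ and an explicit coordinate expansion of each summand, whereas your ``ricochet'' step $(\eins\otimes\bra{\psi^*})\ket{\xi_{\rm sep}}=\ket{\chi}$ obtains it in one stroke without decomposing $\WW$; this is a cosmetic streamlining rather than a different strategy, and the treatment of condition~(ii) is likewise equivalent (the paper phrases it as ``$\WW$ not positive semidefinite $\Leftrightarrow$ $\tilde\WW$ not positive semidefinite,'' which is what your explicit-state argument shows).
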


\begin{proof}
The ``only if'' part( ``$\Rightarrow$'') of the proof can be shown as 
follows:

One can always write the witness operator $\WW$ in its eigenbasis 
$\WW=\sum_n \kappa_n \ketbra{\alpha^{(n)}}$ and therefore 
\begin{equation}
\label{6}
\braket{\eta|\WW|\eta} 
= \sum_n \kappa_n |\braket{\psi|A^{\dagger}\otimes B^{\dagger}\otimes C^{\dagger}|\alpha^{(n)}}|^2 \geq 0.
\end{equation}
Moreover, it holds that
\begin{equation}
\label{8}
\braket{\psi|A^{\dagger}\otimes B^{\dagger}\otimes C^{\dagger}|\alpha^{(n)}} 
= \tr(A^{\dagger}\otimes B^{\dagger} \otimes C^{\dagger} \ket{\alpha^{(n)}}\bra{\psi}).
\end{equation}
We consider a single summand in Eq.~(\ref{6}) and use the
following representation of the SLOCC operations $A$, $B$, $C$ 
and the state $\ket{\psi}$. We write $A=\sum_{ij} A_{ij} \ket{i}\bra{j}$, 
$B=\sum_{i'j'} B_{i'j'} \ket{i'}\bra{j'}$, 
$C=\sum_{i''j''} C_{i''j''} \ket{i''}\bra{j''}$,
$\ket{\alpha^{(n)}}=\sum_{kk'k''}\alpha_{kk'k''}^{(n)}\ket{kk'k''}$ 
and $\ket{\psi}=\sum_{ll'l''}\psi_{ll'l''}\ket{ll'l''}$.
Then we have 
\begin{align}
\nonumber
\tr(A^{\dagger}\otimes B^{\dagger} & \otimes C^{\dagger}  \ket{\alpha^{(n)}}\bra{\psi})
=
\nonumber
\\
&=
\!\!\!
\sum_{i i' i'' j j' j''} 
\!\!\!
A^*_{ij} B^*_{i'j'} C^*_{i''j''}
\alpha_{ii'i''}^{(n)}\psi_{jj'j''}^*
\nonumber
\\ 
&\equiv 
\langle\!\braket{A_{12}\otimes B_{12}\otimes C_{12}|\alpha_1^{(n)},\psi_2^*}\!\rangle, \label{9}
\end{align}
where the indices $1$ and $2$ indicate now the copies of the system and we use 
ket-vectors like $\ket{Y_{12}}\!\rangle=\sum_{ij} Y_{ij} \ket{ij}$ on the 
two-copy Hilbert space of each particle $Y\in\{A,B,C\}$.
In the same way we obtain:
\begin{align}
\label{90}
\tr(\ket{\psi}\bra{\alpha^{(n)}} A \otimes B \otimes C) 
& \equiv \langle\!\braket{\alpha_1^{(n)},\psi_2^*| A_{12}\otimes B_{12}\otimes C_{12}}\!\rangle.
\end{align} 
Thus Eq.~(\ref{6}) can be written as
\begin{equation}
\label{10}
 \langle\!\bra{A_{12} \otimes B_{12} \otimes C_{12}} \WW_1 \otimes \ketbra{\psi^*}_2 \ket{ A_{12}\otimes B_{12} \otimes C_{12}}\!\rangle \geq 0.
 \end{equation}
 So far, the vectors $\ket{Y_{12}}\!\rangle$ with $Y\in\{A,B,C\}$ are not entirely arbitrary, 
 as the operators $A, B$ and $C$ are invertible. However, as any non-invertible matrix can be
 approximated arbitrarily well by invertible matrices and the expression under consideration
 is continuous, the positivity condition in Eq.~(\ref{10}) holds for any vectors 
 $\ket{Y_{12}}\!\rangle$. Let us finally note that it is straightforward to see that if  $\WW$ is 
 not positive semidefinite then $\tilde{\WW}$ is not positive semidefinite as well. This 
 completes the ``only if'' part of the proof. 
 
The "if" part of theorem (``$\Leftarrow$'') follows from the fact that
Eq.~(\ref{10}) for all $\ket{Y_{12}}\!\rangle$ implies Eq.~(\ref{6});
moreover, $\tilde{\WW}=\WW\otimes \ketbra{\psi^*}$ being not positive 
semidefinite implies that $\WW$ is not positive semidefinite.
\end{proof}

In order to start the discussion, we first note that statement of the theorem
clearly holds for any number of parties, the proof can directly be generalized.
Also, we note that the complex conjugation $\ket{\psi^*}$ is relevant, as
there are instances where $\ket{\psi^*}$ and $\ket{\psi}$ are not equivalent
under SLOCC \cite{connylang, kraus2010}.

Second, we compare the theorem with known results. The theorem presents 
a generalization of the main result from Ref.~\cite{hulpke} from the bipartite 
to the multipartite case. The SLOCC classes in the bipartite case are characterized 
by the Schmidt number and the Schmidt witnesses considered in Ref.~\cite{hulpke} 
are just the SLOCC witnesses for the bipartite case. A similar connection for
the special case of bipartite witnesses for Schmidt number one has also been 
discussed in Ref.~\cite{TensProdEW}. Furthermore, for the multipartite case, where 
the Schmidt number classification is a coarse graining of the SLOCC classification,
a connection between Schmidt witnesses and entanglement witnesses has been proved
in Ref.~\cite{sperling}. This connection, however, is not equivalent to ours, 
as the dimension of the enlarged space in Ref.~\cite{sperling} is in general
larger.

Third, Theorem \ref{Th.1} provides the possibility to consider the problem of 
maximizing the overlap of two states under SLOCC from a different perspective. 
That is, by solving the problem of finding the minimal value of $\lambda$, for 
which $\tilde{\WW}=(\lambda \eins - \ketbra{\varphi}) \otimes \ketbra{\psi^*}$ 
is an entanglement witness for full separability one can determine the value 
of the maximal overlap between $\ket{\varphi}$ and $\ket{\psi}$ under SLOCC 
operations. In order to provide a concrete application of Theorem \ref{Th.1}, 
we derived the maximal squared overlap between an $N$-qubit GHZ state
\begin{equation}
\ket{GHZ}= \frac{1}{\sqrt{2}} (\ket{00\cdots 0} + \ket{11\cdots 1}) 
\end{equation}
and the SLOCC class of the $N$-qubit W state
\begin{equation}
\ket{W}= \frac{1}{\sqrt{N}} (\ket{10\cdots 0} + \ket{01\cdots 0} + \dots + \ket{00\cdots 1}) 
\end{equation}
using the relation derived above 
in the Appendix. The resulting value is $3/4$ for $N=3$ 
(numerically already known from Ref.~\cite{3qubitmixed})  
and $1/2$ for $N\geq 4$ (for four-qubit states this value has 
been already found in Ref.~\cite{symmetricmixedSLOCC}). It should 
be noted that there is an asymmetry: While the SLOCC class of the 
three-qubit W state can approximate the GHZ state only to a certain 
degree, one can find arbitrarily close to the W state a state in the 
SLOCC orbit of the GHZ state \cite{3qubitmixed}.

Finally, our result reflects that the separability problem as well as the 
problem of deciding whether two tripartite states are within the same SLOCC 
class are both computationally highly non-trivial. In fact, they were shown 
to be NP-hard \cite{sepNP, gharibian, SLOCCNP}. 

In the following section we will discuss a relaxation of witness condition 
to be positive on all separable states. Instead one can consider the condition
that $\tilde \WW$ should  be positive on states having a positive partial 
transpose (PPT) for any bipartition.

\subsection{Using entanglement criteria for the witness 
construction}

In general, it can be very difficult to find an analytical solution for the 
minimal value of $\lambda$ such that the expectation value of 
$\tilde{\WW}=(\lambda \eins - \ketbra{\varphi}) \otimes \ketbra{\psi^*}$ 
is positive on all product states $\ket{\xi_{\rm sep}}$. To circumvent this 
problem, one can try to broaden the restrictions on the set of states on 
which $\tilde{\WW}$ is positive in a way that the new set naturally includes
the original set of separable states. 

One potential way to do that uses the the criterion of the positivity of
the partial transpose (PPT), as the set of separable states is a subset
of the states which are PPT \cite{peresppt}. More precisely, one can demand that 
$\tilde{\WW}$ is positive on the set of states which are PPT with 
respect to all subsystems in the considered bipartite splittings, 
i.e.,
\begin{align}
&\tr(\varrho_{A_{12}B_{12}C_{12}} \tilde{\WW})\geq 0 \nonumber \\
& \mbox{ for all}~\varrho_{A_{12}B_{12}C_{12}} \mbox{ with: } \varrho^{T_{Y_{12}}} \geq 0,~Y=\{A,B,C\}.
\label{11}
\end{align}
Although the set of PPT states is known to include PPT entangled states, this relaxation 
of the initial conditions offers an advantage, as we are able to formulate the problem 
of determining $\lambda$ as a semi-definite program (SDP)and as such provides a way for an 
exact result \cite{SDP}. For a given $\lambda$ one can consider the optimization problem
\begin{align}
   \mbox{minimize: }&\tr(\varrho \tilde{\WW})\nonumber \\
   \mbox{subject to: }& \varrho \geq 0,\nonumber\\
  & \varrho^{T_i} \geq 0 \mbox{ for } i=A,B,C, \nonumber \\
  & \tr(\varrho)=1.
 \label{12}
\end{align}
Such optimization problems can be solved with standard computer algebra systems. 
If the obtained value in Eq.~(\ref{12}) is non-negative, the initial operator 
$\WW = \lambda \eins - \ketbra{\varphi}$ was an SLOCC witness, so $\lambda$
is an upper bound on the maximal overlap.

To give an example, one may use this optimization for obtaining an upper bound 
on the overlap between the four-qubit cluster state and the SLOCC orbit of the 
four-qubit GHZ state or vice versa. In all the interesting examples, however, 
one obtains only the trivial bound $\lambda=1$. This finds
a natural explanation: If $\lambda$ is the exact maximal overlap, then the 
witness $\tilde \WW$ detects some entangled states which are PPT with respect 
to any bipartition. Consequently, relaxing the positivity on separable states 
to positivity on PPT state is a rather wasteful approximation in our case, and the 
resulting estimate on $\lambda$ is also wasteful. 

The key observation is that given two pure bipartite states, $\ket{\phi}$ and 
$\ket{\psi^*}$ in a $d_1 \times d_1$ and $d_2 \times d_2$ system, respectively, 
the total state 
\begin{align}
\sigma
= &\frac{1-p}{(d_1-1)(d_2-1)}
(\eins_1-\ket{\phi}_1\!\bra{\phi}) 
\otimes
(\eins_2-\ket{\psi^*}_2\!\bra{\psi^*})
\nonumber \\
&+p\ket{\phi}_1\!\bra{\phi}\otimes\ket{\psi^*}_2\!\bra{\psi^*},
\label{bestate}
\end{align}
as a state on a $d_1  d_2  \times d_1 d_2$-system is PPT, but typically entangled. 
This holds for nearly arbitrary choices for $\ket{\phi}$ and $\ket{\psi^*}$ and 
small values of $p$ \cite{pianistates}. Note that states of the form given in 
Eq.~(\ref{bestate}) lead to 
$\tr [(\lambda\eins_1-\ket{\phi}_1\!\bra{\phi})\otimes\ket{\psi^*}_2\!\bra{\psi^*})\sigma]
<0$ for any $\lambda<1$, so they are detected by the witness $\tilde \WW$. Hence, the 
relaxation to states that are PPT does, for general $\ket{\phi}$ and $\ket{\psi}$ not 
allow to determine possible non-trivial values of $\lambda$ for which $\tilde{\WW}$ 
is an entanglement witness. 

We mention that in Ref.~\cite{pianistates} operators of the form $(\lambda\eins-\ket{\phi}\bra{\phi})\otimes (\ket{\psi^*}\bra{\psi^*})$ 
with an appropriate choice of $\lambda$ have been shown to be bipartite 
entanglement witnesses for the case where the Schmidt rank of 
$\ket{\psi^*}$ is smaller than the Schmidt rank of $\ket{\phi}$ for the 
considered bipartite splitting. This can be easily understood using our 
result and the results of Ref.~\cite{hulpke}, as in this case $\ket{\phi}$ 
and $\ket{\psi}$ are in different bipartite SLOCC classes and $\ket{\phi}$ 
cannot be approximated arbitrarily close by a state in the SLOCC class 
of $\ket{\psi}$.

Finally, we add that considering other relaxations of the set of separable 
states can provide a way to estimate the maximal SLOCC overlap using an
SDP. Here, other positive maps besides the transposition,
such as the Choi map \cite{hororeview}, or the SDP approach 
of Ref.~\cite{doherty} seems feasible.


\section{SLOCC overlaps for $2\times 3\times 3$ systems}

Systems consisting of one qubit, one qutrit and one system of 
arbitrary dimension mark the last cases, which still have a 
finite number of SLOCC classes \cite{SLOCCfinite}, and for general 
systems the number of SLOCC classes is infinite \cite{SLOCC2222}. For 
one qubit and two qutrits there are 17 different classes with 
12 of these being truly tripartite entangled and six of them 
containing entangled states with maximal Schmidt rank across
the bipartitions \cite{SLOCCfinite, ChMi10}. Finding the maximal overlap 
of the representative states of the different classes not only 
indicates towards an hierarchy among them, but, as shown in Section 
III, gives insight in the entanglement properties of states in an 
enlarged two-copy system. In fact, one can then construct 
entanglement witnessrs, $\tilde{\WW}$ which detect entanglement 
within states of dimension $ 4 \times 6 \times 6 $.  Thus, for 
all pairs of representatives and SLOCC classes where $\lambda < 1$ 
one can construct a specific $\tilde{\WW}$ which, as discussed
above, typically also detects PPT entangled states.

The unnormalized representative states of the fully entangled
SLOCC classes within a $2 \times 3\times 3$ system are 
\cite{SLOCCfinite}:
\begin{align}
 \ket{\psi_6}&=\ket{000}+\ket{111}, \nonumber \\
 \ket{\psi_7}&=\ket{000}+\ket{011}+\ket{101}, \nonumber \\
 \ket{\psi_8}&=\ket{000}+\ket{011}+\ket{102}, \nonumber \\
 \ket{\psi_9}&=\ket{000}+\ket{011}+\ket{120}, \nonumber\\
 \ket{\psi_{10}}&=\ket{000}+\ket{011}+\ket{122}, \nonumber \\
 \ket{\psi_{11}}&=\ket{000}+\ket{011}+\ket{101}+\ket{112}, \nonumber \\
 \ket{\psi_{12}}&= \ket{000}+\ket{011}+\ket{110}+\ket{121}, \nonumber \\
 \ket{\psi_{13}}&=\ket{000}+\ket{011}+\ket{102}+\ket{120}, \nonumber \\
 \ket{\psi_{14}}&=\ket{000}+\ket{011}+\ket{112}+\ket{120}, \nonumber \\
 \ket{\psi_{15}}&=\ket{000}+\ket{011}+\ket{100}+\ket{122}, \nonumber \\
 \ket{\psi_{16}}&=\ket{000}+\ket{011}+\ket{022}+\ket{101}, \nonumber \\
 \ket{\psi_{17}}&=\ket{000}+\ket{011}+\ket{022}+\ket{101}+\ket{112}.
\end{align}
One can compute the  overlap between one of these states and the 
SLOCC orbit of another state via direct optimization. As for the 
GHZ class and the W state, it can happen that one class can approximate
one state arbitrarily well, so we set the overlap to one, if the numerical
obtained value approximates this with a  numerical precision of $10^{-12}$. 
Note that an exact value of one is impossible, as the SLOCC classes are
proven to be different.

\begin{table*}[t!!]
\begin{small}
 \begin{tabular}{| c| c | c | c | c | c | c | c | c | c | c | c | c | }
    \hline 
   $\cdot\backslash\cdot$ &$\!{\ket{\psi_6}}\!$ & $\!{\ket{\psi_7}}\!$ & $\!{\ket{\psi_8}}\!$ & $\!{\ket{\psi_9}}\!$& $\!{\ket{\psi_{10}}}\!$ & $\!{\ket{\psi_{11}}}\!$ & $\!{\ket{\psi_{12}}}\!$ 
   & $\!{\ket{\psi_{13}}}\!$& $\!{\ket{\psi_{14}}}\!$ 
    & $\!{\ket{\psi_{15}}}\!$ & $\!{\ket{\psi_{16}}}\!$ & 
    $\!{\ket{\psi_{17}}}\!$ 
    \\ \hline \hline
      ${\ket{\psi_6}}$  & $\diamond$ & $1$ & $ 2 / 3$ & $2 / 3$&  $2 / 3$& $3 / 4$ & $3 / 4$ & $3 / 4$ & $\!0.5625\!$ 
     &$ 3 / 4 $ & $ 3 / 4$ & $0.65$ \\ \hline
    ${\ket{\psi_7}}$  & $3 / 4$ & $\diamond$ & $ 2 / 3$ & $2 / 3$&  $2 / 3$& $3 / 4$ & $3 / 4$ & $\!0.5433\!$& $\!0.5625\!$ 
    & $0.7$ & $3 / 4 $ & $\!0.6129\!$ \\ \hline
    ${\ket{\psi_8}}$  & $1$ & $1$ & $\diamond$ & $ 2 / 3 $& $2 / 3$ & $0.875$ & $3 / 4$ & $3 / 4$& $3 / 4$ 
    & $3 / 4$ & $ 3 / 4$ & $\!0.7252\!$ \\ \hline
    ${\ket{\psi_9}}$ & $1$ & $1$ & $2 / 3$ & $\diamond$& $2 / 3$ & $3 / 4$ & $0.875$ & $3 / 4$& $3 / 4$ 
    & $3 / 4$ & $ 3 / 4$ & $\!0.7252\!$  \\ \hline 
    $ \!{\ket{\psi_{10}}}\!$ & $1$ & $1$ & $1$ & $1$& $\diamond$ & $0.875$ & $0.875$ & $\!0.8333\!$& $3 / 4$ 
    & $\!0.9045\!$ & $1$ & $\!0.7955\!$ \\ \hline
    $\! {\ket{\psi_{11}}}\!$ & $1$ & $1$ & $1$ & $2 / 3$& $2 / 3$ & $\diamond$ & $3 / 4$ & $3 / 4$& $3 / 4$ 
    & $3 / 4$ & $3 / 4$ & $0.8$ \\ \hline 
    $\! {\ket{\psi_{12}}}\!$ & $1$ & $1$ & $ 2 / 3$ & $1$& $2 / 3$ & $3 / 4$ & $\diamond$ & $3 / 4$& $3 / 4$ 
    & $3 / 4$ & $3 / 4$ & $0.8$ \\ \hline
    $\! {\ket{\psi_{13}}}\!$ & $1$ & $1$ & $1$ & $1$& $\!0.8333\!$ & $1$ & $1$ & $\diamond$& $1$ 
    & $0.95$ & $1$ & $1$ \\ \hline 
   $\! {\ket{\psi_{14}}}\!$& $1$ & $1$ & $1$ & $1$& $2 / 3$ & $0.875$ & $0.875$ & $\!0.8125\!$& $\diamond$ 
    & $3 / 4$ & $3 / 4$ & $0.8$ \\ \hline
   $\!{\ket{\psi_{15}}}\!$ & $1$ & $1$ & $1$ & $1$& $1$ & $1$ & $1$ & $1$& $1$ 
    & $\diamond$ & $1$ & $1$ \\ \hline 
    $\! {\ket{\psi_{16}}}\!$ & $1$ & $1$ & $1$ & $1$& $\!0.7357\!$ & $0.875$ & $0.875$ & $\!0.7706\!$& $3 / 4$ 
    & $3 / 4$ & $\diamond$ & $0.8$ \\ \hline
    $\!{\ket{\psi_{17}}}\!$& $1$ & $1$ & $1$ & $1$& $0.795$ & $1$ & $1$ & $\!0.8958\!$& $1$ 
    & $0.875$ & $1$ & $\diamond$ \\ \hline
    \end{tabular}
 \end{small}
  \caption{This table shows the numerical values for the maximal 
  squared overlap between $\ket{\psi_i}$ (column) and the SLOCC
     orbit of $\ket{\psi_j}$ (row). See text for further details.
     \label{SLOCCoverlap233}}
 \end{table*}

The values of the numerical maximization of the  SLOCC overlap for the 
different SLOCC classes with respect to the representative states from 
above is given in Table \ref{SLOCCoverlap233}. They should be interpreted as follows: 
For the overlaps between $\ket{\psi_6}$ and $\ket{\psi_7}$ two
different values are given. The value $\lambda=1$ means that
the SLOCC orbit of state $\ket{\psi_6}$ approximates $\ket{\psi_7}$
arbitrarily well. The value $\lambda=3/4$ means that the SLOCC orbit
of $\ket{\psi_7}$ cannot approximate $\ket{\psi_6}$ so well, only
an overlap of $\lambda=3/4$ can be reached. This implies that
$\WW= 3/4 \times \eins - \ketbra{\psi_6}$ is an SLOCC witness, 
discriminating $\ket{\psi_6}$ from the SLOCC orbit of $\ket{\psi_7}$. 
Note that $\ket{\psi_6}$ and $\ket{\psi_7}$ are essentially the
three-qubit GHZ- and W states encountered above.

This also has consequences for the classification of mixed states, see
Fig.~\ref{SLOCCpic233}. For a mixed state, one may ask whether it can be
written as a convex combination of pure states within some SLOCC class.
If a state can be written as such a convex combination of states from
the orbit of $\ket{\psi_7}$, it can also be written with states from
the orbit of $\ket{\psi_6}$, as the latter can approximate the former
arbitrarily well. Consequently, there is an inclusion relation for the mixed
states, as depicted in Fig.~\ref{SLOCCpic233}.

 \begin{figure}[t]
 \begin{center}
  \includegraphics[width=0.7\columnwidth]{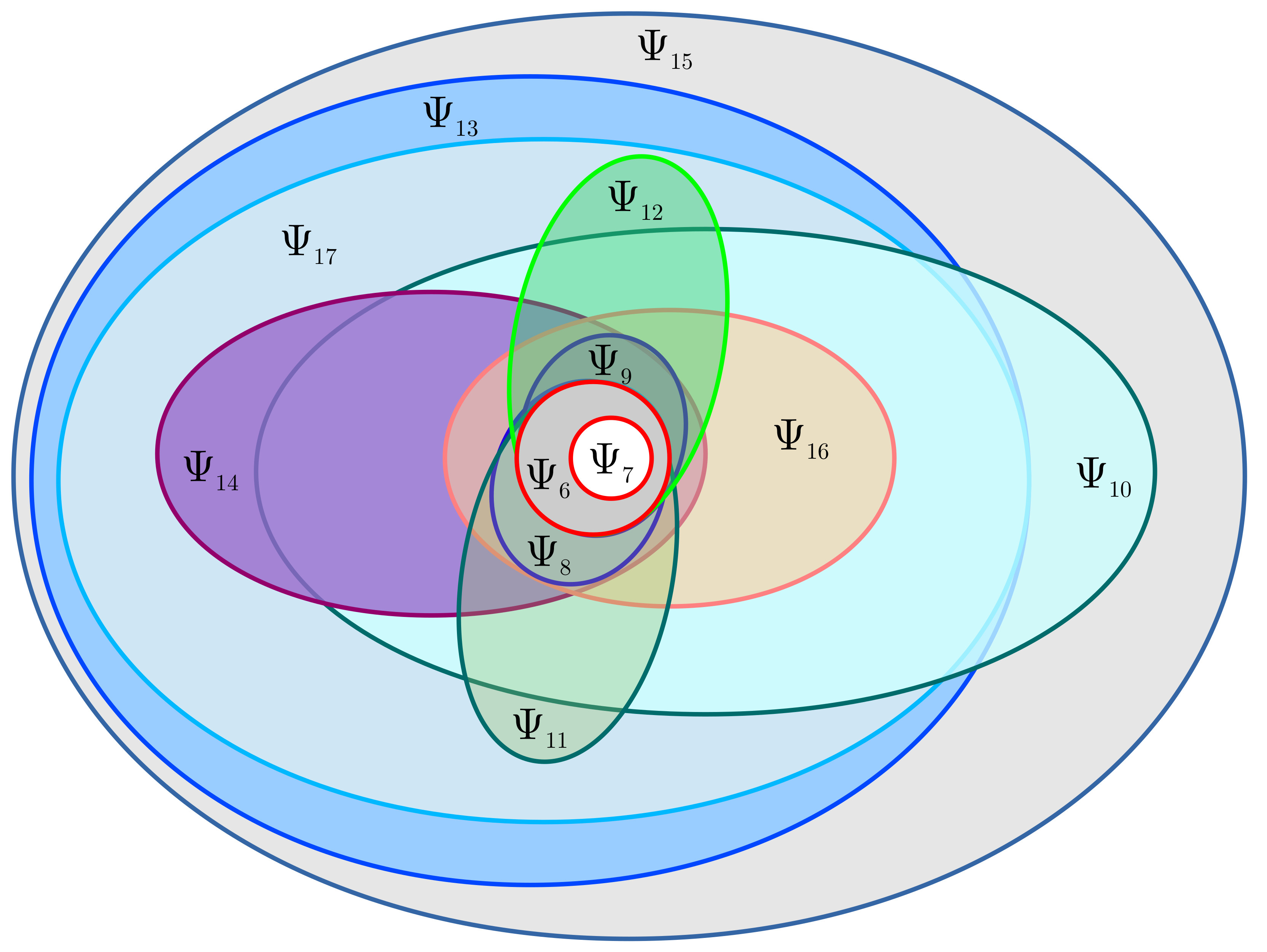} 
  \end{center}
     \caption{Hierarchic structure of SLOCC classes for mixed states within a 
     qubit-qutrit-qutrit system. If one pure state orbit of class $\ket{\psi_i}$
     can be approximated by another SLOCC orbit $\ket{\psi_j}$ arbitrary well, 
     the corresponding mixed states in class $i$ are included in the mixed states
     in class $j$. As can be seen from Table~\ref{SLOCCoverlap233}, $\ket{\psi_{15}}$ 
     is the most powerful class in the sense that any other state $\ket{\psi_i}$ can 
     be reached from $\ket{\psi_{15}}$ via SLOCC operations with arbitrary high accuracy.
     \label{SLOCCpic233}} 
\end{figure}

\section{Conclusions}

For arbitrary numbers of parties and local dimensions we showed 
a one-to-one correspondence between an operator $\WW$ able to 
distinguish between different SLOCC classes of a system and 
another operator $\tilde{\WW}$ that detects entanglement
in a two-copy system. This correspondence thereby enables us to 
directly transfer a solution for one problem to the other. 
Though the relaxation to PPT states in order to construct the 
entanglement witness did not prove to be helpful for reasons 
stated in Section III, it very well might be that other possible 
relaxations on the set of separable states will give more insight 
and a good approximation for an upper bound on the maximal overlap. 
As an concrete application of the presented relation we derived the 
maximal overlap between the $N$-qubit GHZ state and states within the 
$N$-qubit W class. The calculations in Section IV for the 
qubit-qutrit-qutrit system do not only indicate a hierarchy among
the SLOCC classes but also provides us with the option to construct 
a whole set of entanglement witnesses for the doubled system of 
dimensions $4 \times 6\times 6$.

\section{Acknowledgements}
We thank Florian Hulpke and Jan Sperling for helpful discussions. 
This work has been supported by the ERC (Consolidator Grant 683107/TempoQ), 
the DFG, and the Austrian Science Fund (FWF): J 4258-N27.

\section{Appendix: Maximal squared overlap between the GHZ-state and states in the W-class}

In this Appendix we will provide an example of how the relation among 
SLOCC witnesses and entanglement witnesses can be employed and compute the maximal squared 
overlap between the GHZ-state of $N$-qubits,
$\ket{GHZ_N}=1/\sqrt{2}(\ket{00\ldots 0}+\ket{11\ldots 1}$, 
and a normalized $N$-qubit state in the W-class
[with representative $\ket{W_N}=1/\sqrt{N}(\ket{10\ldots 0}+\ket{010\ldots 0}+\ldots +\ket{0\ldots 010}+\ket{0\ldots 01})$]. 
We show that for $N=3$ the maximal squared overlap is given by $\frac{3}{4}$ (see also \cite{3qubitmixed}), 
whereas for $N\geq 4$ it is given by $\frac{1}{2}$. For 4-qubit states this value has been already found in \cite{symmetricmixedSLOCC}.

In order to do so we consider  
\begin{equation}
\tilde{\WW}_{N}=(\lambda_N\eins-\ket{GHZ_N}\bra{GHZ_N})\otimes\ket{W_N}\bra{W_N}
\end{equation} and show that it is an 
entanglement witness (for $2N$-qubit states) with respect to the splitting $(A_1A_2|B_1B_2|C_1C_2|\ldots |Z_1Z_2)$ iff 
$1>\lambda_3\geq \frac{3}{4}\equiv \lambda_3^C$ and $1>\lambda_N\geq\frac{1}{2}\equiv \lambda_N^C$ for $N\geq 4$.
Using Theorem \ref{Th.1} this implies that $\bra{\Psi_W^N}(\lambda_N\eins-\ket{GHZ_N}\bra{GHZ_N})\ket{\Psi_W^N}\geq 0$, where $\ket{\Psi_W^N}$ 
denotes a normalized state in the $N$-qubit W-class, iff $1>\lambda_N\geq\lambda_N^C$. 
Recall that $\bra{\Psi_W^N}(\lambda_N\eins-\ket{GHZ_N}\bra{GHZ_N})\ket{\Psi_W^N}\geq 0$ is
equivalent to $\lambda_N\geq |\braket{GHZ_N|\Psi_W^N}|^2$ and therefore the maximal squared overlap is given by $\lambda_N^C$. 

Before considering the problem of finding the range of $\lambda_N$ for which 
$\tilde{\WW}_{N}$ is an entanglement witness let us first present a parametrization of states 
in the W-class that will be convenient for our purpose and then relate it to 
the parametrization of product states that have to be considered.
It is well known that any state in the W-class can be written as 
$\bigotimes_i U_i(x_0\ket{00\ldots 0}+x_1\ket{10\ldots 0}+x_2\ket{010\ldots 0}+\ldots +x_{N-1}\ket{0\ldots 010}+x_N\ket{0\ldots 01})$ 
with $x_0\geq 0$, $x_i> 0$ for $i\in\{1, \ldots, N\}$ and $U_i$ unitary
\cite{SLOCC222}. 
Note that we do not impose that the states are normalized. 
Equivalently, one can write it as 
$U_1D_1\otimes U_2D_2\otimes \ldots \otimes U_{N-2}D_{N-2}\otimes U_{N-1}g_{N-1}\otimes U_ND_{N} \ket{W_N}$ where
$D_i=\mbox {diag }(1, \tilde{x}_i)$ with $\tilde{x}_i=x_i/x_N >0$ and
\begin{equation}
g_{N-1}=\left(\begin{array}{cc}
x_{N}&x_0\\ 
0&x_{N-1}\\ 
\end{array} \right).
\end{equation}

For the local unitaries on the qubits we will use the parametrization $U_i=U_{ph}(\gamma_i)X(\alpha_i)U_{ph}(\beta_i)$ with $X(\delta) =e^{i \delta X}$, $U_{ph}(\delta)=\mbox {diag }(1, e^{i\delta})$ and $\alpha_i, \beta_i, \gamma_i \in \mathbb{R}$. In order to simplify our argumentation we will use the symmetry that $\bigotimes_i U_{ph}(\delta)\ket{W_n}=e^{i\delta}\ket{W_n}$ and choose $\beta_N=0$, $\beta_i=\beta_i-\beta_N$ for $i\in\{1,\ldots,N-2\}$ and $x_j=x_je^{-i\beta_N}$ for $j=0,N-1$. Furthermore, using 
for the GHZ state the symmetry that $U_{ph}(\delta_1)\otimes U_{ph}(\delta_2)\otimes\ldots \otimes U_{ph}(\delta_{N-2})\otimes U_{ph}(-\sum_{i\in I_0}\delta_i)\otimes U_{ph}(\delta_N)\ket{GHZ_N}=\ket{GHZ_N}$ where here and in the following $I_0=\{1,2,\ldots,N-2,N\}$ one can easily see that when computing the maximal SLOCC overlap between the GHZ state and a 
W class state one can equivalently choose $\gamma_i=0$ for $i\in I_0$ and $\gamma_{N-1}=\sum_{i=1}^N\gamma_i$. 

We will now make use of the fact that 
$\bra{\eta}(\lambda_N\eins-\ket{GHZ_N}\bra{GHZ_N})\ket{\eta}\geq 0$ for 
$\ket{\eta}=A\otimes B\otimes \ldots \otimes Z\ket{W_N}$  if and only if 
$\bra{\xi_{SEP}}[\lambda_N\eins-(\ket{GHZ_N}\bra{GHZ_N})_1]\otimes(\ket{W_N}\bra{W_N})_2\ket{\xi_{SEP}}
\geq 0$ for 
$\ket{\xi_{SEP}}= \ket{A_{12}}\otimes \ket{B_{12}} \otimes \ldots \ket{Z_{12}} $ 
with  
$\ket{\Gamma_{12}}=(\Gamma_1\otimes\eins_2)\ket{\Phi^+}$, $\Gamma\in\{A,B, \ldots Z\}$ 
and $\ket{\Phi^+}=\sum_{i=0}^{1}\ket{ii}$ (this follows from the proof of Theorem \ref{Th.1}). 
As any  state in the W class can be parametrized as explained above we only have to consider 
product states of the form  $\ket{\xi_{SEP}}= \otimes_{i=1}^N\ket{\phi_i}$ with 
$\ket{\phi_i}=(U_iD_i\otimes  \eins )\ket{\Phi^+}=(U_i\otimes  \eins )(\ket{00}+\tilde{x}_i\ket{11})$ 
for $i\in I_0$ and $\ket{\phi_{N-1}}=(U_{N-1}g_{N-1}\otimes  \eins )\ket{\Phi^+}$. 
As before the expectation value of $\tilde{\WW}_{N}$ for states with some separable $\ket{\phi_i}$ can 
be approximated arbitrarily close by the expectation value for a state $\ket{\xi_{SEP}}$ for which all $\ket{\phi_i}$ are entangled. Note that 
 $\bra{\xi_{SEP}}\tilde{\WW}_{N}\ket{\xi_{SEP}}\geq 0$  for all $\ket{\xi_{SEP}}$ as defined above iff the operator $\tilde{w}_N\equiv\bra{\zeta_{SEP}}\tilde{\WW}_{N}\ket{\zeta_{SEP}}\geq 0$ is positive 
 semidefinite for all $\ket{\zeta_{SEP}}=\otimes_{i\in I_0} \ket{\phi_i}$ with $\ket{\phi_i}$ as defined above. This is due to the fact that the parameters of $\ket{\zeta_{SEP}}$ and $\ket{\phi_{N-1}}$ can be chosen independently and $\ket{\phi_{N-1}}$ is an arbitrary state.
 
One obtains for the respective terms of $\tilde{w}_N$ that
\begin{equation}
\bra{\zeta_{SEP}}[\eins_1\otimes(\ket{W_N}\bra{W_N} )_2]\ket{\zeta_{SEP}}=\frac{1}{N}\eins_{\Gamma_{1}}\otimes[\eins_{\Gamma_2}+\sum_{i=1}^{N-2}\tilde{x}_i^2 (\ket{0}\bra{0})_{\Gamma_2}],
\end{equation}
where $\Gamma$ refers to party $N-1$. The other term can be written as $\bra{\zeta_{SEP}}(\ket{GHZ_N}\bra{GHZ_N})_{1}\otimes(\ket{W_N}\bra{W_N} )_{2}\ket{\zeta_{SEP}}=(\ket{\varphi}\bra{\varphi})_{\Gamma_1\Gamma_2}$ 
with 
\begin{align} 
\label{phi}
\ket{\varphi}_{\Gamma_1\Gamma_2}=
&\frac{1}{\sqrt{2N}}\{[\sum_{j\in I_0}(-i\sin (\alpha_j)\tilde{x}_je^{-i\beta_j}\prod_{k\in I_0\backslash\{j\}}\cos (\alpha_k))\ket{0}_{\Gamma_1}
\nonumber
\\
\nonumber
&+\sum_{j\in I_0}(\cos (\alpha_j)\tilde{x}_je^{-i\beta_j}\prod_{k\in I_0\backslash\{j\}}(-i\sin (\alpha_k)))\ket{1}_{\Gamma_1}]\otimes \ket{0}_{\Gamma_2}
\nonumber \\ \nonumber
&+[\prod_{j\in I_0}\cos (\alpha_j)\ket{0}_{\Gamma_1}+\prod_{j\in I_0}(-i \sin (\alpha_j))\ket{1}_{\Gamma_1})]\otimes\ket{1}_{\Gamma_2}\}\\
\equiv& \ket{\varphi_0}_{\Gamma_1}\ket{0}_{\Gamma_2}+ \ket{\varphi_1}_{\Gamma_1}\ket{1}_{\Gamma_2}.
\end{align}
Hence, we have that
$\tilde{w}_N=\frac{\lambda_N}{N}\eins_{\Gamma_{1}}\otimes[\eins_{\Gamma_2}+\sum_{i=1}^{N-2}\tilde{x}_i^2 (\ket{0}\bra{0})_{\Gamma_2}]-
(\ket{\varphi}\bra{\varphi})_{\Gamma_1\Gamma_2}$.
Defining $\mu=||\varphi_0||$ and $\nu=||\varphi_1||$ we can
write $\ket{\varphi}=\mu \ket{\Phi_0}_{\Gamma_1}\ket{0}_{\Gamma_2}+ \nu \ket{\Phi_1}_{\Gamma_1}\ket{1}_{\Gamma_2}$ where  $||\Phi_i||=1$.
We construct now the following orthonormal basis: 
\begin{align} 
&\ket{\Psi_0}=\frac{\mu}{\sqrt{\mu^2+\nu^2}} \ket{\Phi_0}_{\Gamma_1}\ket{0}_{\Gamma_2}+ \frac{\nu}{\sqrt{\mu^2+\nu^2}}
\ket{\Phi_1}_{\Gamma_1}\ket{1}_{\Gamma_2}
,
\\ 
&\ket{\Psi_1}=\frac{\nu}{\sqrt{\mu^2+\nu^2}} \ket{\Phi_0}_{\Gamma_1}\ket{0}_{\Gamma_2}- 
\frac{\mu}{\sqrt{\mu^2+\nu^2}} \ket{\Phi_1}_{\Gamma_1}\ket{1}_{\Gamma_2},
\\ &\ket{\Psi_2}=\ket{\Phi_0^{\perp}}_{\Gamma_1}\ket{0}_{\Gamma_2},
\\ 
&\ket{\Psi_3}=\ket{\Phi_1^{\perp}}_{\Gamma_1}\ket{1}_{\Gamma_2},
\end{align}
where $\braket{\Phi_i|\Phi_i^{\perp}}=0$ for $i\in\{0,1\}$. 

It can be easily seen that 
$ \tilde{w}_N=\sum_{i,j=0}^1\Lambda_{ij}\ket{\Psi_i}\bra{\Psi_j}+\frac{\lambda_N}{N} (1+\sum_{i=1}^{N-2}\tilde{x}_i^2)\ket{\Psi_2}
\bra{\Psi_2}+\frac{\lambda_N}{N} \ket{\Psi_3}\bra{\Psi_3}$ with 
\begin{equation}
\Lambda=\left(\begin{array}{cc}
\frac{\lambda_N}{N}(1+ \sum_{i=1}^{N-2}\tilde{x}_i^2\frac{\mu^2}{\mu^2+\nu^2})-(\mu^2+\nu^2)&\sum_{i=1}^{N-2}\tilde{x}_i^2\frac{\lambda_N \mu\nu}{N (\mu^2+\nu^2)} \\ 
 \sum_{i=1}^{N-2}\tilde{x}_i^2 \frac{\lambda_N\mu\nu}{N (\mu^2+\nu^2)} &\frac{\lambda_N}{N}(1+ \sum_{i=1}^{N-2}\tilde{x}_i^2\frac{\nu^2}{\mu^2+\nu^2})\\ 
\end{array} \right).
\end{equation}
Note that as we consider the case $\lambda_N>0$ (otherwise $\tilde{\WW}_{N}\leq 0$ which implies that it cannot be an entanglement witness) and as $\tilde{x}_i\in R$
we have that $ \tilde{w}_N\geq 0$ iff $\Lambda \geq 0$. In order to determine for which values of $\lambda_N$ the matrix $\Lambda$ is a positive semidefinite matrix we impose that $\tr(\Lambda)\geq 0$ and $\det (\Lambda)\geq 0$.  It can be easily seen that $\det (\Lambda)\geq 0$ implies $\tr(\Lambda)\geq 0$ and one straightforwardly obtains that $\Lambda \geq 0$ iff $\frac{\lambda_N}{N}\geq \frac{\mu^2}{ \sum_{i\in I_0}\tilde{x}_i^2}+\nu^2$. Hence, the minimal $\lambda_N$ for which $\tilde{\WW}_{N}$ is an entanglement witness is given by 
\begin{equation}
\label{lambdamin}
\lambda_{N}^C=\underset{\tilde{x}_i,\alpha_i,\beta_i\in\mathbb{R}}{\sup}N(\frac{\mu^2}{\sum_{i\in I_0}\tilde{x}_i^2}+\nu^2).
\end{equation}
One can easily derive from Eq.~(\ref{phi}) that 
\bea
\mu^2&=&\frac{1}{2N}\Big[|\sum_{j\in I_0}\sin (\alpha_j)\tilde{x}_je^{-i\beta_j}\prod_{k\in I_0\backslash\{j\}}\cos (\alpha_k))|^2+
\nonumber \\
&&+
|\sum_{j\in I_0}(\cos (\alpha_j)\tilde{x}_je^{-i\beta_j}\prod_{k\in I_0\backslash\{j\}}\sin (\alpha_k))|^2\Big]
\eea 
and
\be
\nu^2=\frac{1}{2N}[\prod_{j\in I_0}\cos^2 (\alpha_j)+\prod_{j\in I_0}\sin^2 (\alpha_j)].
\ee
Note that as $|\sum_i a_i|\leq \sum_i |a_i|$ for any complex numbers $a_i$ (and as any possible pair of values of $|\sin (\delta)|$ and $|\cos (\delta)|$ is attained for $\delta\in[0,\pi/2]$ and $\sin (\delta)\geq 0$ and $\cos (\delta)\geq 0$ for this parameter range) one obtains that the supremum in Eq. (\ref{lambdamin}) is attained for $\beta_i=0$ and $\alpha_i\in[0,\pi/2]$. 

We will in the following distinguish between $N=3$ and $N\geq 4$ and first discuss the case $N=3$. 
Inserting the corresponding expressions for $\mu^2$ and $\nu^2$ in Eq.~(\ref{lambdamin}) and using $\beta_1=\beta_3=0$ one straightforwardly obtains that 
\be
\lambda_{3}^C=\underset{x,\alpha_1,\alpha_3\in\mathbb{R}}{\sup}\,\,\,\frac{1}{2}[1+\frac{x}{1+x^2}\sin (2\alpha_1)\sin (2\alpha_3)].
\ee
It is easy to see that therefore the supremum is obtained for $\alpha_1=\alpha_3=\pi/4$ and $x=1$ which implies that $\lambda_{3}^C=\frac{3}{4}$. Hence, if $\lambda_3$ is larger than $\frac{3}{4}$ the operator $\tilde{w}_3$ is positive semidefinite. However, it should be noted that $\tilde{\WW}_{3}$ is only an entanglement witness if  $\lambda_3<1$ as for $\lambda_3\geq 1$ the operator $\tilde{\WW}_{3}$ is positive semidefinite and there exists no state that is detected. A state that attains the maximum overlap of $3/4$ is given by $1/\sqrt{3}(\ket{+++}+\ket{--+}+\ket{+--})$ with $\ket{\pm}=1/\sqrt{2}(\ket{0}\pm\ket{1})$. Using $\lambda_3=3/4$, $\beta_1=\beta_3=0$, $x=1$ and  $\alpha_1=\alpha_3=\pi/4$ the remaining 
parameters for a state in the W class that attains the maximum can be obtained by calculating the eigenvector of $\tilde{w}_3$ for the eigenvalue $0$. Note that in order to obtain the state presented here symmetries of the GHZ and W state have been used.

We will proceed with $N\geq 4$ and will use that the supremum is attained for $\beta_i=0$. 
Note that then $\frac{\mu^2}{\sum_{i\in I_0}\tilde{x}_i^2}$ can be equivalently written as
\bea
(\vec{v}_0\cdot\vec{v}_1)^2+(\vec{v}_0\cdot\vec{v}_2)^2,
\eea
where 
\begin{align}
  \vec{v}_0&=\frac{1}{\sqrt{\sum_{i\in I_0}\tilde{x}_i^2}}(\tilde{x}_1,\tilde{x}_2,\ldots,\tilde{x}_{n-2},\tilde{x}_n)\\
  \vec{v}_1&=(y_1,\ldots, y_{N-2},y_N)\,\,\text{ with: } y_j=\frac{1}{\sqrt{2N}}\sin (\alpha_j)\prod_{k\in I_0\backslash\{j\}}\cos (\alpha_k)\\
   \vec{v}_2&=(z_1,\ldots, z_{N-2},z_N)\,\,\text{ with:}  z_j=\frac{1}{\sqrt{2N}}\cos (\alpha_j)\prod_{k\in I_0\backslash\{j\}}\sin (\alpha_k).
\end{align}
Hence, one obtains
\begin{equation}
\lambda_{N}^C=\underset{\tilde{x}_i,\alpha_i\in\mathbb{R}}{\sup}N[(\vec{v}_0\cdot\vec{v}_1)^2+(\vec{v}_0\cdot\vec{v}_2)^2+
\nu^2]\leq \underset{\alpha_i\in\mathbb{R}}{\sup}N[|\vec{v}_1]^2+|\vec{v}_2|^2+\nu^2]
\end{equation}
as $\vec{v}_0$ is a normalized vector. Inserting the expressions 
for $\vec{v}_1, \vec{v}_2$ and $\nu$ we have that 
\begin{align}
  \lambda_{N}^C\leq& \underset{\alpha_i\in\mathbb{R}}{\sup}\frac{1}{2}(\sum_{j\in I_0}\cos^2 (\alpha_j)\prod_{k\in I_0\backslash\{j\}}\sin^2 (\alpha_k)+\sum_{j\in I_0}\sin^2 (\alpha_j)\prod_{k\in I_0\backslash\{j\}}\cos^2 (\alpha_k)\\\nonumber
&+\prod_{j\in I_0}\cos^2 (\alpha_j)+\prod_{j\in I_0}\sin^2 (\alpha_j))\\\nonumber
=&\underset{\alpha_i\in\mathbb{R}}{\sup}\frac{1}{2}(\sum_{j\in I_0\backslash\{N\}}\cos^2 (\alpha_j)
\prod_{k\in I_0\backslash\{j\}}\sin^2 (\alpha_k)+\sum_{j\in I_0\backslash\{N\}}\sin^2 (\alpha_j)\prod_{k\in I_0\backslash\{j\}}\cos^2 (\alpha_k)\\\nonumber
&+\prod_{j\in I_0\backslash\{N\}}\cos^2 (\alpha_j)+\prod_{j\in I_0\backslash\{N\}}\sin^2 (\alpha_j))\\\nonumber
\leq&\underset{\alpha_i\in\mathbb{R}}{\sup}\frac{1}{2}(\sum_{j\in I_0\backslash\{N\}}\cos^2 
(\alpha_j)\prod_{k\in I_0\backslash\{j,N\}}\sin^2 (\alpha_k)+\sum_{j\in I_0\backslash\{N\}}\sin^2
(\alpha_j)\prod_{k\in I_0\backslash\{j,N\}}\cos^2 (\alpha_k)\\\nonumber
&+\prod_{j\in I_0\backslash\{N\}}\cos^2 (\alpha_j)+\prod_{j\in I_0\backslash\{N\}}\sin^2 (\alpha_j))\\\nonumber
&\leq \underset{\alpha_i\in\mathbb{R}}{\sup}\frac{1}{2}(\sum_{j\in \{1,2,3\}}\cos^2 (\alpha_j)
\prod_{k\in\{1,2,3\},k\neq j}\sin^2 (\alpha_k)+\sum_{j\in\{1,2,3\}}\sin^2 (\alpha_j)\prod_{k\in \{1,2,3\},k\neq j}\cos^2 (\alpha_k)\\\nonumber
+&\prod_{j\in \{1,2,3\}}\cos^2 (\alpha_j)+\prod_{j\in \{1,2,3\}}\sin^2 (\alpha_j))\\\nonumber
&=\frac{1}{2}.
 \end{align}
 Note that for the second inequality we used that $0\leq\cos^2(\alpha_i)\leq 1$ and $0\leq\sin^2(\alpha_i)\leq 1$ and then repeatedly 
applied the same argumentation.  Note further that the upper bound obtained in the last line is equal to $1/2$ independent of the value of the parameters $\alpha_i$ for $i\in\{1,2,3\}$. As the state $\ket{00\ldots 0}$ which can be approximated arbitrarily close by a state in the W
class has a squared overlap with the GHZ
state of $1/2$ we also have that $\lambda_{N}^C\geq 1/2$. Hence, one obtains $\lambda_{N}^C=1/2$ for $N\geq 4$. Note that this is also the maximal squared overlap between the GHZ
state and an arbitrary separable state.


\section*{References}

\end{document}